\newcommand{\rn}[1]{\textcolor{black}{#1}}
\newcommand{\Z}{\mbox{\rm\bf Z}}
\newcommand{\Zplus}{\Z^+}
\date{}
\title{An Incentive Compatible, Efficient Market for Air Traffic Flow Management}
\titlerunning{An Incentive Compatible, Efficient Market for ATFM}
\author{Ruta Mehta\inst{1} \and Vijay V. Vazirani\inst{2}}
\institute{ Dept of Computer Science, University of Illinois at Urbana-Champaign
\and College of Computing, Georgia Tech}
\authorrunning{R. Mehta and V. V. Vazirani} 
\begin{document}

\maketitle

\begin{abstract}
We present a market-based approach to the Air Traffic Flow Management (ATFM) problem.  The goods in our market are delays and buyers
are airline companies; the latter pay money to the Federal Aviation Administration (FAA) to buy away the desired amount of delay on a
per flight basis. We give a notion of equilibrium for this market and an LP whose every optimal solution gives an equilibrium allocation of
flights to landing slots as well as equilibrium prices for the landing slots.  Via a reduction to matching, we show that this
equilibrium can be computed combinatorially in strongly polynomial time.  Moreover, there is a special set of equilibrium prices, which
can be computed easily, that is identical to the VCG solution, and therefore the market is incentive compatible in dominant strategy.  
 \end{abstract}

\section{Introduction}
Air Traffic Flow Management (ATFM) is a challenging operations research problem whose importance keeps escalating with
the growth of the airline industry. In the presence of inclement weather, the problem becomes particularly serious and
leads to substantial monetary losses and delays\footnote{According to \cite{Bertsimas2}, the U.S. Congress Joint Economic Committee
estimated that in 2007, the loss to the U.S. economy was \$25.7 billion, due to 2.75 million hours of flight delays. In
contrast, the total profit of U.S. airlines in that year was \$5 billion.},  Yet, despite massive efforts on the part of
the U.S. Federal Aviation Administration (FAA), airline companies, and even the academia, the problem remains largely unsolved. 

In a nutshell, the reason for this is that any viable solution needs to satisfy several conflicting requirements, e.g., in
addition to ensuring efficiency the solution also needs to be viewed as ``fair'' by all parties involved. Indeed,
\cite{Bertsimas3} state that `` ... While this work points at the possibility of dramatically reducing delay costs to the
airline industry vis-a-vis current practice, the vast majority of these proposals remain unimplemented. The ostensible
reason for this is fairness ... .'' It also needs to be computationally efficient -- even moderate sized airports today
handle hundreds of flights per day, with the 30 busiest ones handling anywhere from 1000 to 3000 flights per day.  The full
problem involves scheduling flight-landings simultaneously for multiple airports over a large period of time, taking into
consideration inter-airport constraints. Yet, according to \cite{Bertsimas2}, current research has 
mostly remained at the level of a single airport because of computational tractability reasons.

Building on a sequence of recent ideas that were steeped in sound economic theory, and drawing on ideas from game theory and
the theory of algorithms, we present a solution that has a number of desirable properties. Our solution for allocating
flights to landing slots at a single airport is based on the principle of a free market, which is known to be fair and a remarkably
efficient method for allocating scarce resources among alternative uses (sometimes stated in the colorful language of the
``invisible hand of the market'' \cite{smith}). We define a market in which goods are delays and buyers are airline companies; 
the latter pay money to the FAA to buy away the desired amount of delay on a per flight basis and we give a notion of equilibrium 
for this market. W.r.t. equilibrium prices, the total cost (price paid and cost of delay) of each agent, i.e., flight, is minimized.

This involves a multi-objective optimization, one for each agent, just like all market equilibrium problems. Yet, for some markets 
an equilibrium can be found by optimizing only one function. As an example, consider the linear case of Fisher's market \cite{scarf}
for which an optimal solution to the remarkable Eisenberg-Gale \cite{eisenberg} convex formulation gives equilibrium prices
and allocations. For our market, we give a special LP whose optimal solution gives an equilibrium. 

Using results from matching theory, we show how to find equilibrium allocations and prices in
strongly polynomial time.  Moreover, using \cite{leonard} it turns out that our solution is incentive compatible in dominant strategy,
i.e., the players will not be able to game the final allocation to their advantage by misreporting their private information.  

We note that the ATFM problem involves several issues that are not of a game-theoretic or algorithmic nature, e.g.,  the
relationship between long term access rights (slot ownership or leasing) and short term access rights on a given day of
operations, e.g., see \cite{ball1}.  Our intention in this paper is not to address the myriad of such issues. Instead, we have
attempted to identify a mathematically clean, core problem that is amenable to the powerful tools
developed in the theories stated above, and whose solution could form the core around which a practical scheme can be built.  

Within academia, research on this problem started with the pioneering work of Odoni \cite{Odoni} and it flourished with the
extensive work of Bertsimas et. al. ; we refer the reader to \cite{Bertsimas2}\cite{Bertsimas1} for thorough literature overviews
and references to important papers.  These were centralized solutions in which the FAA decides a schedule that is efficient,
e.g., it decides which flights most critically need to be served first in order to minimize cascading delays in the entire
system. 

A conceptual breakthrough came with the realization that {\em the airlines themselves are the best judge of how to achieve
efficiency\footnote{e.g., they know best if a certain flight needs to be served first because it is carrying CEOs of
important companies who have paid a premium in order to reach their destination on time or if delaying a certain flight by
30 minutes will not have dire consequences, however delaying it longer would propagate delays through their entire system
and result in a huge loss.},} thus moving away from centralized solutions. This observation led to a solutions based on
collaborative decision making (CDM) which is used in practice \cite{ball2,ball3,wambsganss}. 

More recently, a market based approach was proposed by Castelli, Pesenti and Ranieri \cite{RC}. Although their formulation
is somewhat complicated, the strength of their approach lies in that it not only
leads to efficiency but at the same time, it finesses away the sticky issue of fairness -- whoever pays gets smaller
delays, much the same way as whoever pays gets to fly comfortably in Business Class!  Paper \cite{RC} also gave a
tatonnement-based implementation of their market. Each iteration starts with FAA announcing prices for landing slots. Then,
airlines pick   their most preferred slots followed by FAA adjusting prices, to bring parity between supply and demand, for
the next iteration. However, they are unable to show convergence of this process and instead propose running it a
pre-specified number of times, and in case of failure, resorting to FAA's usual solution. They also give an example for which
incentive compatibility does not hold. 

Our market formulation is quite different and achieves both efficient running time and incentive compatibility.
We believe that the simplicity of our solution for this important problem, and the fact that it draws on fundamental ideas from 
combinatorial optimization and game theory, should be viewed as a strength rather than a weakness.

\subsection{Salient features of our solution}
\label{sec.salient}

In Section \ref{sec.model} we give details of our basic market model for allocating a set of flights to landing slots for one airport. 
This set of flights is picked in such a way that their actual arrival times lie in a window of a couple of hours; the reason for the
latter will be clarified in Section \ref{sec.multiple}.  The goods in our market are delays and buyers are airline
companies; the latter pay money to the FAA to buy away the desired amount of delay on a per flight basis.  Typically flights
have a myriad interdependencies with other flights -- because of the use of the same aircraft for subsequent flights,
passengers connecting with other flights, crew connecting with other flights, etc. The airline companies, and not FAA, are
keenly aware of these and are therefore in a better position to decide which flights to avoid delay for. The information
provided by airline companies for each flight is the dollar value of delay as perceived by them. 

For finding equilibrium allocations and prices in our market, we give a special LP in which parameters can be set according
to the prevailing conditions at the airport and the delay costs declared by airline companies. We arrived at this LP as
follows. Consider a traffic network in which users selfishly choose paths from their source to destination. One way of
avoiding congestion is to impose tolls on roads. \cite{CDR} showed the existence of such tolls for minimizing the total
delay for the very special case of one source and one destination, using Kakutani's fixed point theorem. Clearly, their
result was highly non-constructive. In a followup work, \cite{FJM} gave a remarkable LP whose optimal solution yields such
tolls for the problem of arbitrary sources and destinations and moreover, this results in a polynomial time algorithm. Their
LP, which was meant for a multi-commodity flow setting, was the starting point of our work. One essential difference between
the two settings is that whereas they sought a Nash equilibrium, we seek a market equilibrium; in particular, the latter
requires the condition of market clearing.  

We observe that the underlying matrix of our LP is totally unimodular and hence it admits an integral optimal solution. Such
a solution yields an equilibrium schedule for the set of flights under consideration and the dual of this LP yields
equilibrium price for each landing slot. Equilibrium entails that each flight is scheduled in such a way that the sum of the
delay price and landing price is minimum possible. We further show that an equilibrium can be found via an algorithm for the
minimum weight perfect $b$-matching problem and hence can be computed combinatorially in strongly polynomial time. In
hindsight, our LP resembles the $b$-matching LP, but there are some differences.  

Since the $b$-matching problem reduces to the maximum matching problem, our market is essentially a matching market.
Leonard \cite{leonard} showed that the set of equilibrium prices of a matching market \rn{with minimum sum} 
corresponds precisely to VCG payments \cite{nisan.chap9},  thereby showing that the market is incentive compatible
in dominant strategy. \rn{Since equilibrium prices form a lattice \cite{DG,SS,AJM}, the one minimizing sum has to be simultaneously minimum for all goods.}
For our market, we give a simple, \rn{linear-time} procedure that converts arbitrary equilibrium prices to ones that are
simultaneously minimum for all slots. Incentive compatibility with these prices follows.  An issue worth mentioning is that
the total revenue, or the total cost, of VCG-based incentive compatible mechanisms has been studied extensively, mostly with
negative results \cite{AT,KKT,ESS,CS,HT}. In contrast, since the prices in our natural market model happened to be VCG
prices, we have no overhead for making our mechanism incentive compatible.

The next question is how to address the scheduling of landing slots over longer periods at multiple airports, taking into
consideration inter-airport constraints. Airlines can and do anticipate future congestion and delay issues and take these into
consideration to make complex decisions. However, sometimes  unexpected events happening even at a few places are likely to
have profound cascading effects at geographically distant airports, making it necessary to make changes dynamically.
For such situations, in Section \ref{sec.multiple}, we propose a
dynamic solution by decomposing this entire problem into many small problems, each of which will be solved by the method
proposed above. The key to this decomposition is the robustness of our solution for a single set of flights at one airport:
we have not imposed any constraints on delay costs, not even monotonicity. Therefore, airline companies can utilize this
flexibility to encode a wide variety of inter-airport constraints.

We note that this approach opens up the possibility of making diverse types of travelers happy through the following
mechanism: the additional revenues generated by FAA via our market gives it the ability to subsidize landing fees for low
budget airlines. As a result, both types of travelers can achieve an end that is most desirable to them, business travelers
and casual/vacation travelers. The former, in inclement weather, will not be made to suffer delays that ruin their important
meetings and latter will get to fly for a lower price (and perhaps sip coffee for an additional hour on the tarmac, in
inclement weather, while thinking about their upcoming vacation).  

To the best of our knowledge, ours is the first work to give a simple LP-based efficient solution for the ATFM problem. 
We note that an LP similar to ours is also given in \cite{AJM}. This paper considers two-sided matching markets with payments and non-quasilinear utilities. 
They show that the lowest priced competitive equilibria are group strategy proof, which induces VCG payments for the case of quasilinear utilities.
Another related paper is \cite{NDG}, which considers a Shapley-Shubik assignment model for unit-demand buyers and sellers with one indivisible item each. 
Buyers have budget constraint for every item. This sometimes prevents a competitive equilibrium from existing. They give a strongly polynomial-time 
algorithm to check if an equilibrium exists or not, and if it does exist, then it computes the one with lowest prices.
However, they do not ensure incentive compatibility.

\vspace{-0.2cm}

\section{The Market Model}
\label{sec.model}
In this section we will consider the problem of scheduling landings at one airport only.  Let $A$ be the set of all flights,
operated by various airlines, that land in this airport in a given period of time.  We assume that the given period of time
is partitioned into a set of landing time slots, in a manner that is most convenient for this airport; let $S$ denote this
set. Each slot $s$ has a capacity $cap(s) \in \Zplus$ specifying the number of flights that can land in this time slot. As
mentioned in \cite{ball1} the arrival of each aircraft consumes approximately the same amount of airport capacity, therefore
justifying the slot capacities as the number of flights while ignoring their types.  We will assume that $cap(s)$ is
adjusted according to the prevailing weather condition. 

For $i \in A$, the airline of this flight decides the {\em landing window} for flight $i$, denoted by $W(i)$. This gives the
set of time slots in which this flight should land as per prevailing conditions, e.g., if there are no delays, the earliest
time slot in $W(i)$ will be the scheduled arrival time\footnote{We will assume that if the flight arrives before this time, it will have to
wait on the tarmac for some time. This appears to be standard practice in case gates are not available.} of flight $i$.
For each slot $s \in W(i)$, the airline also decides
its {\em delay cost}, denoted by $c_{is}\ge 0$. Thus, if time slot $s$ is the scheduled arrival time of flight $i$, then
$c_{is} = 0$\footnote{All the results of this paper hold even if $c_{is} \neq 0$.} and in general $c_{is}$ is the dollar
value of the cost, as perceived by the airline, for delay resulting from landing in slot $s$.

A {\em landing schedule} is an assignment of flights to time slots, respecting capacity constraints. Each time slot will be
assigned a {\em landing price} which is the amount charged by FAA from the airline company if its flight lands in this
time slot. We will define the {\em total cost} incurred by a flight to be the sum of the price paid for landing and the
cost of the delay.

We say that a given schedule and prices are {\em an equilibrium landing schedule and prices} if:
\begin{enumerate}
\item
W.r.t. these prices, each flight incurs a minimum total cost.
\item
The landing price of any time slot that is not filled to capacity is zero. This condition is justified by observing that if at equilibrium,
a slot with zero price is not filled to capacity, then clearly its price cannot be made positive.
This is a standard condition in equilibrium economics.
\end{enumerate}

\subsection{LP formulation}
\label{sec.LP}

In this section, we will give an LP that yields an equilibrium schedule; its dual will yield equilibrium landing prices. 
Section \ref{sec.rounding} shows how they can be computed in strongly polynomial time.

For $s \in S$, $x_{is}$ will be the indicator variable that indicates whether flight $i$ is scheduled in time slot $s$;
naturally, in the LP formulation, this variable will be allowed to take fractional values. The LP given below obtains a
scheduling where a flight may be assigned partially to a slot (fractional scheduling), that minimizes the total dollar value of the delays incurred by all flights, subject to capacity constraints of the time slots. (Note that the inequality in the first constraint will be satisfied with equality
since the objective is being minimized; the formulation below was chosen for reasons of convenience).

\begin{equation}
\label{LP}
\begin{array}{ll}
\mbox{minimize} & \sum_{i \in A,s\in S} c_{is} x_{is} \\ 
\mbox{subject to}       & \forall i \in A: \ \sum_{s \in W(i)}  x_{is} \geq 1   \\
          & \forall s \in S:  \ \sum_{i \in A, s \in W(i)} x_{is}  \leq cap(s)     \\
          & \forall i \in A, \  s \in W(i):  \  x_{is}  \geq 0     

\end{array}
\end{equation}

Let $p_s$ denote the dual variable corresponding to the second set of inequalities. We will interpret $p_s$ as the
price of landing in time slot $s$. Thus if flight $i$ lands in time slot $s$,
the total cost incurred by it is $p_s + c_{is}$.
Let $t_i$ denote the dual variable corresponding to the first set of inequalities. In Lemma \ref{lem.cost} we will prove
that $t_i$ is the total cost incurred by flight $i$ w.r.t. the prices found by the dual; moreover, each flight incurs
minimum total cost.
  
The dual LP is the following.

\begin{equation}
\label{dual}
\begin{array}{ll}
\mbox{maximize} & \sum_{i \in A} t_i \  -  \ \sum_{s \in S} cap(s) \cdot p_{s}  \\
\mbox{subject to}       & \forall i \in A, \ \forall s \in W(i):  \  t_{i}  \leq   p_s \  +   \  c_{is}       \\
          & \forall i \in A: \  t_{i} \geq 0  \\
          & \forall s \in S:  \   p_{s}  \geq  0    
\end{array}
\end{equation}

\begin{lemma}
\label{lem.cost}
W.r.t. the prices found by the dual LP (\ref{dual}), each flight $i$ incurs minimum total cost and it is given by $t_i$.
\end{lemma}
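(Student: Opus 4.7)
The plan is to use LP duality, specifically complementary slackness, together with a simple observation about the structure of the dual.

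First, I would fix any optimal dual solution $(t^*, p^*)$ and analyze how $t_i$ relates to the other variables. The variable $t_i$ appears in the dual objective with coefficient $+1$ and only in the constraints $t_i \le p_s + c_{is}$ for $s \in W(i)$ together with $t_i \ge 0$. Since the dual is a maximization and $t_i$ does not appear in the $\sum_s cap(s)\cdot p_s$ term, at any optimum $t_i$ must be pushed up against its tightest upper bound, giving
\[
t_i^* \;=\; \min_{s \in W(i)} \bigl(p_s^* + c_{is}\bigr).
\]
The nonnegativity constraint $t_i \ge 0$ is automatically satisfied because $p_s^*, c_{is} \ge 0$, so it does not bind. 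By definition, $p_s^* + c_{is}$ is exactly the total cost flight $i$ would incur if it were to land in slot $s$, so $t_i^*$ is the minimum total cost flight $i$ could possibly pay given the prices $p^*$.

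Next, I would invoke complementary slackness to show that flight $i$ is actually scheduled in a slot achieving this minimum. Let $x^*$ be an optimal primal solution (which exists, and is integral by the total unimodularity noted after the LP). The constraint $\sum_{s \in W(i)} x_{is} \ge 1$ forces $x_{is}^* > 0$ for at least one $s \in W(i)$. For every such $s$, complementary slackness applied to the dual constraint $t_i \le p_s + c_{is}$ yields
\[
t_i^* \;=\; p_s^* + c_{is}.
\]
Hence the slot(s) where flight $i$ lands achieve the minimum $\min_{s \in W(i)}(p_s^* + c_{is}) = t_i^*$, and the total cost incurred by flight $i$ equals $t_i^*$, as claimed.

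I do not expect a serious obstacle here; the argument is essentially one line of dual-maximization reasoning followed by one line of complementary slackness. The only mild subtlety is checking that the constraint $t_i \ge 0$ is redundant at optimality (so that $t_i^*$ really does equal the minimum rather than a truncated version of it), but this follows immediately from the nonnegativity of prices and delay costs assumed in the model.
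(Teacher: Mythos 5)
Your proof is correct and follows essentially the same route as the paper: complementary slackness on the primal variables gives $t_i = p_s + c_{is}$ for any slot with $x_{is}^* > 0$, and dual feasibility gives $t_i \le p_s + c_{is}$ for all other slots in $W(i)$, so the assigned slot minimizes total cost and that cost equals $t_i$. Your opening observation that the maximization pushes $t_i$ up to $\min_{s}(p_s^* + c_{is})$ is a harmless (and correct) extra remark, but it is subsumed by the complementary-slackness step you give afterwards.
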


\begin{proof}
Applying complementary slackness conditions to the primal variables we get
\[ \forall i \in A, \ \forall s \in W(i):  \ \ x_{is} > 0 \ \  \Rightarrow \ \
t_{i}  =  p_s \  +   \  c_{is}   . \]
Moreover, for time slots $s \in S$ which are not used by flight $i$, i.e., for which $x_{is} = 0$, by the dual
constraint, the total cost of using this slot can only be higher than $t_i$. The lemma follows.
\end{proof}

The second condition required for equilibrium is satisfied because of complementarity applied to the variables $p_s$:
\[ \mbox{If} \ \ \sum_{i\in A, s\in W(i)}   x_{is}  < cap(s), \ \ \mbox{then} \ \ p_s = 0 .\]

At this point, we can provide an intuitive understanding of how the actual slot assigned
to flight $i$ by LP (\ref{LP}) is influenced by the delay costs declared for flight $i$ and how LP (\ref{dual}) sets prices of slots.
Assume that time slot $s$ is the scheduled arrival time of flight $i$, i.e., $c_{is} = 0$ and $s'$ is a later slot. Then by Lemma \ref{lem.cost},
slot $s$ will be preferred to slot $s'$ only if $p_s - p_{s'} \leq c_{is'}$. Thus $c_{is'}$ places an upper bound on the extra money 
that can be charged for buying away the delay incurred by landing in $s$ instead of $s'$. Clearly, flight $i$ will 
incur a smaller delay, at the cost of paying more, if its airline declares large delay costs for late landing. Furthermore,
by standard LP theory, the dual variables, $p_s$, will adjust
according to the demand of each time slot, i.e., a time slot $s$ that is demanded by a large number of flights that have declared large delay costs 
will have a high price. In particular, if a slot is not allocated to capacity, its price will be zero as shown above.

It is easy to see that the matrix underlying LP (\ref{LP}) is totally unimodular. Therefore, it has an integral optimal solution.
Further, minimization ensures that for every flight $i$ at most one of the $x_{is}$s is one and the rest are zero. Hence we get:

\begin{theorem}\label{thm.equi}
Solution of LP (\ref{LP}) and its dual (\ref{dual}) give an (optimal) equilibrium schedule and equilibrium prices. 
\end{theorem}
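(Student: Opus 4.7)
The plan is to verify the two equilibrium conditions from Section \ref{sec.model} on the primal-dual pair produced by LP (\ref{LP}) and LP (\ref{dual}), after first arguing that an integral primal optimum exists so that the ``schedule'' is a genuine assignment of flights to slots rather than a fractional one.

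First I would justify integrality. The constraint matrix of LP (\ref{LP}) has rows indexed by flights (one $+1$ per slot in $W(i)$) and rows indexed by slots (one $+1$ per flight with $s \in W(i)$); each column $x_{is}$ has exactly one $+1$ in the flight part and one $+1$ in the slot part. This is the incidence matrix of a bipartite graph (flights versus slots), which is a standard totally unimodular matrix. Since $\cp(s)\in\Zplus$ and the right-hand side $1$ on the flight constraints is integral, every vertex of the feasible polytope is integral, so LP (\ref{LP}) attains its optimum at an integral point $x^*$. Minimization together with $x_{is}\ge 0$ and the lower bound $\sum_{s\in W(i)} x_{is}\ge 1$ forces $x^*_{is}\in\{0,1\}$ with exactly one slot chosen per flight; thus $x^*$ encodes a landing schedule that respects the slot capacities.

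Next I would verify the two equilibrium conditions. Condition~1 (each flight pays minimum total cost under the dual prices $p^*_s$) is exactly the content of Lemma \ref{lem.cost}, which was proved from complementary slackness on the primal variables $x_{is}$ together with the dual feasibility inequality $t_i\le p_s+c_{is}$: the slot $s$ chosen for flight $i$ achieves $t_i=p_s+c_{is}$, and every other slot in $W(i)$ satisfies $p_s+c_{is}\ge t_i$. Condition~2 (any slot not filled to capacity has price zero) is the complementary slackness condition applied to the dual variables $p_s$: if $\sum_{i: s\in W(i)} x^*_{is} < \cp(s)$, then the corresponding slot constraint is slack in the primal, so $p^*_s=0$. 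Non-negativity of $p^*_s$ and $t^*_i$ is built into the dual.

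The main (mild) obstacle is really the first step: one must make sure the LP's optimum can be taken to be integral and, moreover, represents a well-defined schedule (each flight assigned to exactly one slot). Both follow from the totally unimodular structure plus the observation that minimization of a nonnegative objective with $\sum_s x_{is}\ge 1$ is tight at optimum, so no flight is assigned fractionally or multiply. Once integrality is in hand, the two equilibrium conditions are immediate consequences of LP duality and the already-established Lemma \ref{lem.cost}, completing the proof of Theorem \ref{thm.equi}.
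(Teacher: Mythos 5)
Your proposal matches the paper's argument essentially step for step: total unimodularity of the bipartite incidence matrix gives an integral optimum with $x^*_{is}\in\{0,1\}$, condition~1 of equilibrium is Lemma \ref{lem.cost} (complementary slackness on the primal variables plus dual feasibility), and condition~2 is complementary slackness applied to the $p_s$. The proof is correct and takes the same route as the paper.
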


\section{Strongly Polynomial Implementation}
\label{sec.rounding}

As discussed in the previous section, LP (\ref{LP}) has an integral optimal solution as its underlying matrix is totally unimodular.
In this section, we show that the problem of obtaining such a solution can be reduced to a minimum
weight perfect $b$-matching problem\footnote{The instance we construct can also be reduced to a minimum weight perfect
matching problem with quadratic increase in number of nodes.}, and hence can be found in strongly polynomial time; see
\cite{schrijver.book} Volume A. The equilibrium prices, i.e., solution of (\ref{dual}), can be obtained from the
dual variables of the matching. Furthermore, we show that there exist equilibrium prices that induce VCG payments, and
hence is incentive compatible in dominant strategy. Finally, we give a strongly polynomial time procedure to
compute such prices.

Consider the edge-weighted bipartite graph $(A', S,E)$, with bipartition $A' = A \cup \{v\}$, where $A$ is the set of 
flights and $v$ is a special vertex, and $S$ is the set of time slots. The set of edges $E$ and weights are as
follows: for $i \in A, \ s \in W(i)$, $(i, s)$ is an edge with weight $c_{is}$, and for each $s \in S,$ there
are $cap(s)$ many $(v, s)$ edges\footnote{This is not going to affect strong polynomiality, because we can assume that
$cap(s)\le |A|, \forall s$ without loss of generality.}, each with unit weight (a multi-graph).

The matching requirements are: $b_i=1$ for each $i \in A$, $b_s=cap(s)$ for each $s \in S$, and $b_v$$=\sum_{s \in S} {cap(s)} -
|A|$ for $v$. Clearly, the last quantity is non-negative, or else LP (\ref{LP}) is infeasible. 
The following lemmas show that the equilibrium landing schedule and prices can be computed using minimum weight perfect
$b$-matching of graph $(A', S, E)$.

\begin{lemma}\label{le.matching}
Let $F^*\subset E$ be a perfect $b$-matching in $(A',S,E)$ and $x^*$ be a schedule where $x^*_{is}=1$ if $(i,s)\in F^*$. $F^*$
is a minimum weight perfect $b$-matching if and only if $x^*$ is an optimal solution of LP (\ref{LP}).
\end{lemma}
\begin{proof}
To the contrary suppose $x'$ and not $x^*$ is the optimal solution of LP (\ref{LP}). Let
$F'=\{(i,s)\in E\ |\ x'_{is}=1\}\cup \{(cap(s) - \sum_{i; s \in W(i)} x'_{is}) \mbox{ many } (v,s)\ |\ s \in S\}$ 
be the set
of edges corresponding to schedule $x'$. Clearly, $F'$ is a perfect $b$-matching. Note that the matching edges
incident on $v$ contribute cost $b_v$ in any perfect $b$-matching. Since, $x'$ and not $x^*$ is an optimal solution of 
LP (\ref{LP}), we have, 
\[
\begin{array}{lcl}
\displaystyle\sum_{i \in A, s \in W(i)} c_{is} x'_{is} + b_v <
\displaystyle\sum_{i \in A, s \in W(i)} c_{is} x^*_{is} + b_v
& \Rightarrow & \displaystyle\sum_{(i,j)\in F'} c_{ij} < \displaystyle\sum_{(i,j) \in F^*} c_{ij}
\end{array}
\]
Contradicting $F^*$ being the minimum weight perfect matching.
The reverse implication follows by similar argument in the reverse order.
\end{proof}

Using Lemma \ref{le.matching}, next we show that the dual variables of the $b$-matching LP give an equilibrium price vector.
In the $b$-matching LP there is an equality for each node to ensure its matching requirement. Let $u_v$, $u_i$ and $q_s$ be
the dual variables corresponding to the equalities of nodes $v$, $i \in A$ and $s \in S$. 
Then the dual LP for minimum weight perfect $b$-matching in graph $(A',S,E)$ is as follows. 

\begin{equation}
\label{eq.matching}
%
\begin{array}{ll}
\vspace{0.1cm}
\max  : & \displaystyle\sum_{i \in A} {u_i} + \displaystyle\sum_{s \in S} {cap(s)q_s} +u_v b_v\\
\vspace{0.1cm}
\mbox{s.t.}  & \forall i\in A,\ s \in W(i): u_i \le - q_s + c_{is} \\
 & \forall s \in S:\ \ \ \ \ \ \ \ \ \ \ \ \ \  u_v \le - q_s+1 
\end{array}
\end{equation}

There are no non-negativity constraints on the dual variables since the corresponding primal constraints are equality.

\begin{lemma}\label{le.optprices}
There exists a dual solution $(u^*,q^*)$ of (\ref{eq.matching}) with $u^*_v=1$, and given that, 
$-q^*$ yields a solution of LP (\ref{dual}). 
\end{lemma}
\begin{proof}
If $(u^*,q^*)$ is a dual solution then so is $v=(u^*+\delta,q^*-\delta)$ for any $\delta \in \mathbb R$. This is
because, clearly $v$ is feasible. Further, since $|A|+b_v=\sum_s cap(s)$ the value of objective function at $v$ is same as
that at $(u^*,q^*)$. 

Therefore given any solution of the dual, we can obtain one with $u^*_v=1$ by an additive scaling.
Replacing $u_v$ with $1$ and $q_s$ with $-p_s$ in (\ref{eq.matching}) gives $max\{\sum_i u_i - \sum_s cap(s) p_s + b_v\ |\ u_i \le
p_s+c_{is}, \ p_s \ge 0\}$, which is exactly (\ref{dual}), and hence the lemma follows.
%
\end{proof}

Since a primal and a dual solution of a minimum weight perfect $b$-matching can be computed in strongly
polynomial time \cite{schrijver.book}, the next theorem follows using Lemmas \ref{le.matching} and \ref{le.optprices}, and
Theorem \ref{thm.equi}.

\begin{theorem}
\label{thm.eqsp}
There is a combinatorial, strongly polynomial algorithm for computing an equilibrium landing schedule and 
equilibrium prices.
\end{theorem}
%
%

\subsection{Incentive Compatible in Dominant Strategy}
Since equilibrium price vectors of the market is in one-to-one correspondence with the solutions of the
dual matching LP with $u_v=1$ (Lemma \ref{le.optprices}), they need not be unique, and in fact form a convex set.
In this section we show that one of them induces VCG payments, and therefore is incentive compatible in dominant strategy.
Further, we will design a method to compute such VCG prices in strongly polynomial time. 

An instance of the perfect $b$-matching problem can be reduced to the perfect matching problem by duplicating node $n$,
$b_n$ times.  Therefore, if we convert the costs $c_{is}$ on edge $(i,s)$ to payoffs $H-c_{is}$ for a big enough constant
$H$, the market becomes an equivalent matching market (also known as assignment game) \cite{SS} where the costs of producing
goods, the slots in our case, are zero. It is not difficult to check that equilibrium allocations and prices of our original
market and the transformed matching market exactly match.

\rn{For such a market, Leonard \cite{leonard} showed that the set of equilibrium prices of a matching market \rn{with minimum sum} 
correspond precisely to VCG payments \cite{nisan.chap9}, thereby showing that the market is incentive compatible
in dominant strategy at such a price vector. Since the proof in \cite{leonard} is not formal, we have provided a complete proof in Appendix \ref{sec.incentive}. Since equilibrium prices form a lattice \cite{DG,SS,AJM}, the one minimizing sum has to be simultaneously minimum for all goods.\footnote{Equilibrium
prices $p$ are minimum if for any other equilibrium prices $p'$ we have $p_s \leq p'_s, \ \forall s \in S$.}}
Clearly, such a price vector has to be unique.  
Next we give a procedure to compute the minimum equilibrium price vector, starting from any equilibrium price vector $p^*$ and
corresponding equilibrium schedule $x^*$. 

The procedure is based on the following observation: Given equilibrium prices $p^*$ and corresponding schedule $x^*$, construct
graph $G(x^*,p^*)$ where slots form the node set. Put a directed edge from slot $s$ to slot $s'$ if there exists a flight, say $i$,
scheduled in $s$ at $x^*$, and it is indifferent between $s$ and $s'$ in terms of total cost, i.e. $x^*_{is}=1$ and
$p^*_s+c_{is}=p^*_{s'}+c_{is'}$. An edge in graph $G(x^*,p^*)$ indicates that if the price of slot $s'$ is decreased then
$i$ would prefer $s'$ over $s$. Therefore, in order to maintain $x^*$ as an equilibrium schedule the price of $s$ also has
to be decreased by the same amount. 

\begin{lemma}\label{le.path}
Prices $p^{*m}$ give the minimum equilibrium prices if and only if every node in $G(x^*,p^{*m})$ has a
directed path from a zero priced node, where $x^*$ is the corresponding equilibrium schedule.  
\end{lemma}
\begin{proof}
Suppose slot $s$ does not have a path from a zero priced node.
Consider the set $D$ of nodes which can reach $s$ in $G^*=G(x^*,p^{*m})$; clearly, they have positive prices.
Therefore, $\exists\epsilon>0$ such that the prices of all the slots in $D$ can
be lowered by $\epsilon$ without violating the equilibrium condition ($1$), contradicting minimality of $p^{*m}$.

For the other direction, the intuition is that if every node is connected to a zero priced node in $G(x^*,p^{*m})$, then price of any
slot can not be reduced without enforcing price of some other slot go negative, in order to get the corresponding equilibrium schedule.
The formal proof is as follows:

To the contrary suppose every node is connected to a zero priced node in $G^*$ and there are equilibrium
prices $p'\le p^{*m}$ such that for some $s \in S$, $p^{*m}_s > p'_s>0$. Consider, one such $s$ nearest to a zero-priced node in $G^*$.
Since, $p'_s\ge 0$, we have $p^{*m}_s>0$, and therefore $s$ is filled to its capacity at prices $p^{*m}$ (using equilibrium condition
(2) of Section \ref{sec.model}). Let $x'$ be
the equilibrium schedule corresponding to prices $p'$. 

Let $s' \rightarrow s$ in $G^*$. By choice of $s$ we have that $p'_{s'}=p^{*m}_{s'}$. In that case, a
flight, say $i'$, allocated to $s'$ at $p^*$ will move to $s$ at $p'$.
Implying that $\sum_i x'_{is'} < \sum_i x^*_{is'} \le cap(s')$. Hence $p'_{s'}=0\Rightarrow p^{*m}_{s'} =0$ (using equilibrium condition (2)). 
Let $Z=\{s\ |\ p^{*m}_s=0\}$. There are two cases at this point:
\medskip

\noindent{\bf Case I -} Flights in slot $s$ at $x^{*}$ remain in $s$ at $x'$, {\em i.e.}, $\{i\ | x^*_{is}=1\}\subseteq \{i\
| x'_{is}=1\}$:\\
Since, $x'_{i's}=1$ and $x^*_{i's}=0$, implying $\sum_i x'_{is} > \sum_i x^*_{is} = cap(s)$, a contradiction.
\medskip

\noindent{\bf Case II -} Some flight $i$ scheduled in $s$ at $x^*$, reschedules at $x'$, {\em i.e.,} $\exists i,\ x^*_{is}=1,
x'_{is}=0$:\\
Construct a graph $H$, where slots are nodes, and there is an edge from $u$ to $v$ if $\exists i,\ x^*_{iu}=1, x'_{iv}=1$, {\em
i.e.,} flight $i$ moved from $u$ to $v$ when prices are changed from $p^{*m}$ to $p'$, with weight being number of edges moved.
Note that price of every node with an incoming edge should have decreased while going from $p^{*m}$ to $p'$. 
Therefore, nodes of $Z$ have no incoming edges. Further, nodes with incoming edges are filled to capacity at $p^{*m}$ since their prices
are non-zero. In that case, total out going weight of such a node should be at least total incoming weight in $H$. 

If there is a cycle in $H$, then subtract weight of one from all its edges, and remove
zero-weight edges. Repeat this until there are no cycles. 
Since, $s' \in Z$, it had no incoming edge, but had an edge to $s$. Therefore, there is a path in remaining $H$ starting at $s'$. 
Consider the other end of this path. 
Clearly, it has to be filled beyond its capacity at $x'$, a contradiction.
\end{proof}

Using the fact established by Lemma \ref{le.path} next we design a procedure to compute the minimum equilibrium prices in
Table 1, given any equilibrium prices $p^*$ and corresponding schedule $x^*$. 

\begin{table}[!h]
\caption{Procedure for Computing Minimum Optimal Prices}
\label{alg.main}
\vspace{-0.4cm}

\begin{center}
\begin{tabular}{|l|}\hline
MinimumPrices($x^*,p^*$)\\
\hspace{5pt}$1.$ $Z\leftarrow$ Nodes reachable from zero-priced nodes in $G(x^*,p^*)$.\\ 
\hspace{5pt}$2.$ Pick a $d \in S\setminus Z$ \\
\hspace{5pt}$3.$ $D\leftarrow$ \{Nodes that can reach $d$ in $G(x^*,p^*)\}$, $\delta\leftarrow 0$,\\
\hspace{16pt}and $p^*_s\leftarrow p^*_s-\delta, \forall s \in D$ \\
\hspace{5pt}$4.$ Increase $\delta$ until one of the following happen \\
\hspace{16pt}- If price of a slot in $D$ becomes zero, then go to $1$. \\
\hspace{16pt}- If a new edge appears in $G(x^*,p^*)$, then recompute $Z$. \\
\hspace{20pt} If $d \in Z$ then go to $2$ else go to $3$. \\
\hspace{5pt}$5.$ Output $p^*$ as the minimum prices. \\ \hline
\end{tabular}
\end{center}
\vspace{-0.6cm}

\end{table}

\begin{lemma}\label{lem.min}
Given an equilibrium $(x^*,p^*)$, MinimumPrices($x^*,p^*$) outputs minimum prices in time $O(|A||S|^2)$.
\end{lemma}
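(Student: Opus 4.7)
The plan is to establish three things: that MinimumPrices maintains an equilibrium $(x^*,p^*)$ throughout, that it terminates precisely at the minimum equilibrium prices, and that the total work is $O(|A||S|^2)$.

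For the invariant, the crucial observation is that $G(x^*,p^*)$ has no edge from $S\setminus D$ into $D$: such an edge $s\to s'$ with $s\notin D,\ s'\in D$ would let $s$ reach $d$ via $s'$, contradicting $s\notin D$. Consequently, when step~4 uniformly decreases prices on $D$ by $\delta$, the only equilibrium inequalities $p_s+c_{is}\leq p_{s'}+c_{is'}$ (for flight $i$ assigned to $s$ with $s'\in W(i)$) whose slack shrinks are those with $s\notin D,\ s'\in D$; by the same observation each such inequality is strict at the start of the decrease, and step~4 stops the instant any one becomes tight or any $p_s$ with $s\in D$ hits zero, so the minimum-cost equilibrium condition is preserved. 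The zero-price-of-unsold-good condition is also safe: because $d\notin Z$, no zero-priced slot can lie in $D$ (else $d$ would belong to $Z$), so every price change occurs on a slot already filled to capacity, and a slot whose price drops to zero only strengthens the condition.

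The algorithm exits once $S\setminus Z=\emptyset$. By Lemma~\ref{le.path} this reachability property is necessary for minimality, and I would argue it is also sufficient via a path-sum argument: for any other equilibrium $p'$ and any slot $s$, pick a path $s_0\to s_1\to\cdots\to s_k=s$ in $G(x^*,p^*)$ with $s_0$ zero-priced. Each edge $s_j\to s_{j+1}$ is generated by some flight $i_j$ scheduled at $s_j$ satisfying $p^*_{s_j}+c_{i_j s_j}=p^*_{s_{j+1}}+c_{i_j s_{j+1}}$, while equilibrium of $p'$ gives $p'_{s_{j+1}}-p'_{s_j}\geq c_{i_j s_j}-c_{i_j s_{j+1}}$. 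Telescoping these along the path and using $p'_{s_0}\geq 0=p^*_{s_0}$ yields $p'_s\geq p^*_s$ for every $s$, so $p^*$ is componentwise minimum.

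For the time bound, each entry to step~1 either begins the algorithm or follows a slot hitting zero price (adding a node to the zero-priced set), so step~1 is entered $O(|S|)$ times. Between consecutive step-1 entries, $D$ grows monotonically (each newly tight edge from $S\setminus D$ into $D$ brings its tail into $D$), so step~3 is re-entered $O(|S|)$ times, and each event in step~4 costs $O(|A|)$ to find the minimum slack $(p_{s'}+c_{is'})-(p_s+c_{is})$ over triples $(i,s,s')$ with $x^*_{is}=1,\ s\notin D,\ s'\in W(i)\cap D$ if slacks are maintained incrementally, plus amortized reachability updates charged to node insertions into $D$ or into $Z$. Multiplying gives $O(|S|)\cdot O(|S|)\cdot O(|A|)=O(|A||S|^2)$. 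The main obstacle is precisely this amortized bookkeeping: naively recomputing $D$, $Z$, and the minimum slack after every event would blow the bound by a factor of $|S|$, so care is needed to update these structures incrementally and to charge the work to the (monotone) growth of $D$ within a phase and of $Z$ across phases.
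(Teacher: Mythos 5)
Your proof is correct and, on the two substantive points, actually more complete than the paper's own argument. The paper's proof simply asserts that the output still satisfies both equilibrium conditions and then invokes Lemma~\ref{le.path} --- which only establishes the \emph{necessity} of reachability from a zero-priced node for minimality --- to conclude minimality, i.e., it silently uses the converse. You supply both missing pieces: the case analysis showing that a uniform decrease on $D$ can only shrink the slack of constraints crossing from $S\setminus D$ into $D$ (all of which start strict, since a tight one would put its tail in $D$), and the telescoping argument along a path from a zero-priced slot showing that reachability is also \emph{sufficient}, so that $p^*\le p'$ componentwise for every equilibrium $p'$. (For the telescoping step you implicitly use that any equilibrium price vector supports the schedule $x^*$; this follows from LP complementary slackness between any optimal primal and any optimal dual, and is worth a sentence.) On the running time, your accounting ($O(|S|^2)$ events at $O(|A|)$ amortized cost each) differs from the paper's ($O(|S|)$ executions of Step~3 at $O(|A||S|)$ each) but reaches the same bound; one small slip is your claim that $D$ grows monotonically \emph{between consecutive Step-1 entries} --- the algorithm may switch $d$ via Step~2 in that interval, and $D$ is then recomputed for the new $d$. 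The count still works if you instead argue that $D$ grows monotonically within each fixed-$d$ phase (so $O(|S|)$ Step-3 entries per phase) and that there are $O(|S|)$ phases because each change of $d$ forces $d$ into the monotonically growing set $Z$.
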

\begin{proof}
Note that the size of $Z$ and edges in $G(x^*,p^*)$ are increasing. Therefore, Step $3$ is executed $O(|S|)$ many times in
total. Step $4$ may need $O(|A||S|)$ time to compute the threshold $\delta$. 
Therefore the running time of the procedure MinimumPrices is $O(|A||S|^2)$. 
Let the output price vector be $p^{*m}$.
The lemma follows from the fact that $(x^*,p^{*m})$ still satisfy both the equilibrium conditions, and every
slot is reachable from a zero priced node in $G(x^*,p^{*m})$ (Lemma \ref{le.path}).
\end{proof}

Theorems \ref{thm.equi} and \ref{thm.eqsp}, Lemma \ref{lem.min}, together with \cite{leonard} give: 

\begin{theorem}
\label{thm.main}
There exists an incentive compatible (in dominant strategy) market mechanism for scheduling a set of flight landings at a single
airport; moreover, it is computable combinatorially in strongly polynomial time.
\end{theorem}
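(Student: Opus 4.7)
The plan is to assemble the four earlier results into a single mechanism and then verify each required property. The mechanism itself is: given the reported windows $W(i)$ and delay costs $c_{is}$, (i) construct the edge-weighted bipartite multigraph $(A',S,E)$ of Section \ref{sec.rounding}, (ii) compute a minimum-weight perfect $b$-matching $M^*$ together with optimal dual variables $q^*$, and read off the equilibrium schedule $x^*$ and prices $p^*_s = -q^*_s$ via Lemma \ref{lem.main}, and (iii) run MinimumPrices$(x^*, p^*)$ from Lemma \ref{lem.min} to convert $p^*$ into the (unique) minimum equilibrium price vector $p^{m^*}$. By Theorem \ref{thm.equi} and the argument inside Lemma \ref{lem.min}, the pair $(x^*, p^{m^*})$ is still an equilibrium.

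For the complexity claim, step (ii) is strongly polynomial by the classical $b$-matching algorithms referenced in \cite{schrijver.book}, and step (iii) runs in $O(|A||S|^2)$ by Lemma \ref{lem.min}. Combined, the entire mechanism is combinatorial and strongly polynomial.

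For incentive compatibility, the plan is to invoke \cite{leonard} after exhibiting a correspondence between our market and a standard bipartite matching market. Specifically, I would reduce the $b$-matching instance to a plain perfect matching instance by replicating each node $n$ exactly $b_n$ times, then transform each edge weight $c_{is}$ into a payoff $H - c_{is}$ for a sufficiently large constant $H$ so that all payoffs are non-negative. Under this reformulation the markets have the same set of equilibrium allocations and the same set of equilibrium price vectors (up to the translation induced by $H$), so the minimum-price vector $p^{m^*}$ produced in step (iii) coincides with the simultaneously minimum equilibrium prices of the associated matching market. By \cite{leonard} those minimum prices equal the VCG payments \cite{nisan.chap9}, and the standard VCG argument then yields dominant-strategy incentive compatibility.

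The only substantive thing to check, and hence the main obstacle, is the equivalence of the two markets: one must verify that the bijection of allocations and prices preserves both equilibrium conditions, and that the affine transformation $c_{is} \mapsto H - c_{is}$ together with the $b_n$-fold node duplication commutes with the notion of minimum equilibrium prices. Once this bookkeeping is in place, the proof reduces to citing Theorem \ref{thm.equi}, Lemmas \ref{lem.main} and \ref{lem.min}, and \cite{leonard}, and the theorem follows.
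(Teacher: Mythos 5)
Your proposal is correct and follows essentially the same route as the paper: it assembles Theorem \ref{thm.equi}, Lemmas \ref{lem.main} and \ref{lem.min}, and the reduction to a unit-demand matching market (node duplication plus the $H-c_{is}$ payoff transformation) so that Leonard's characterization of minimum equilibrium prices as VCG payments applies. The ``bookkeeping'' you flag as the main obstacle is exactly the step the paper asserts without detail (``it is easy to see that equilibrium allocations and prices of both markets exactly match''), so your treatment is, if anything, slightly more careful.
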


\section{Dealing with Multiple Airports}
\label{sec.multiple}

In this section, we suggest how to use the above-stated solution to deal with unexpected events that result in global, cascading delays.
Our proposal is to decompose the problem of scheduling landing slots over a period of a day at
multiple airports into many small problems, each dealing with a set of flights whose arrival times lie in a window of a
couple of hours -- the window being chosen in such a way that all flights would already be in the air and their actual
arrival times, assuming no further delays, would be known to the airline companies and to FAA. At this point, an airline
company has much crucial information about all the other flights associated with its current flight due to connections, crew availability, etc.
It is therefore in a
good position to determine how much delay it needs to buy away for its flight and how much it is willing to pay, by setting
$c_{is}$s accordingly. This information is used by FAA to arrive at a landing schedule.  The process is repeated every
couple of hours at each airport.

\bibliographystyle{splncs03}
\bibliography{kelly}
\appendix
\section{Incentive Compatible in Dominant Strategy}
\label{sec.incentive}

In order to find a socially optimal and fair allocation, it is important that reporting true private information
be the best strategy for the agents; such a mechanism is called incentive compatible in dominant strategy.  
Hence an important question for our setting is: Can an airline secure a better deal by reporting fictitious $c_{is}$s? 

The dual (\ref{dual}) will have a convex set of equilibrium solutions (prices) in general. Among these, consider solutions that
minimize the sum of prices of all slots. Leonard \cite{leonard} showed that payment as per these prices, together with the optimal allocation, are incentive compatible in dominant strategy (DSIC). However the proof is in discussion form, and not absolutely formal. For convenience, we give a formal proof in this section. 

We will prove that equilibrium price vector with minimum sum is unique and moreover induces VCG payments. 
VCG payments together with social-welfare maximizing allocation induces truthful (DSIC) mechanism \cite{nisan.chap9}. 
First observe that such an equilibrium must have a slot with price zero -- otherwise
subtracting the minimum price from all slots leads to a better equilibrium.
Let $x^*$ and $p^*$ be equilibrium landing schedule and equilibrium prices respectively.
The optimality condition of Lemma \ref{lem.cost} can be rewritten as 

\begin{equation}\label{eq.opt}
x^*_{is}=1\ \ \ \Rightarrow\ \ \ p^*_s + c_{is} \le p^*_{is'} +
c_{is'},\ \ \forall s' \in W(i) 
\end{equation} 

Construct graph $G(x^*,p^*)$ where slots form the node set. Put a directed edge from slot $s$ to slot $s'$ with label $i$ if
there is a flight $i$ scheduled in $s$, and it is indifferent between $s$ and $s'$ in terms of total cost, i.e. $x^*_{is}=1$
and $p^*_s+c_{is}=p^*_{s'}+c_{is'}$. An edge in graph $G(x^*,p^*)$ indicates that if the price of slot $s'$ is decreased
then $i$ would prefer $s'$ over $s$ and hence violating condition (\ref{eq.opt}).

\begin{lemma}\label{le.path}
Let $p^*$ be equilibrium price minimizing $\sum_{s \in S} p_s$. Then every node in $G(x^*,p^*)$ has a
directed path from a zero priced node.  
\end{lemma}
\begin{proof}
Suppose slot $s$ does not have a path from a zero priced node.
Consider the set $D$ of nodes which can reach $s$ in $G(x^*,p^*)$; clearly, they have positive prices.
Therefore, $\exists\epsilon>0$ such that the prices of all the slots in $D$ can
be lowered by $\epsilon$ without violating the optimality condition (\ref{eq.opt}), contradicting minimality of $p^*$.
\end{proof}

\begin{lemma}
There is a unique equilibrium price vector that minimizes $\sum_s p_s$.
\end{lemma}
\begin{proof}
Suppose $p$ and $p'$ both are equilibrium price vectors minimizing $\sum_s p_s$. Then $\exists s \in S$ such that $p'_s <
p_s$. Consider the path $P$ in $G(x^*,p)$ from a zero price node, $s_0$, to $s$. 
Next consider prices $p'$. In order to satisfy 
condition \ref{eq.opt} for all the pairs of slots represented by edges on this path, the price of each node on this path
has to be reduced by $p_s - p'_s$, thereby assigning a negative price to $s_0$, contradicting the existence of $p'$.
\end{proof}

Henceforth let $p^*$ denote the equilibrium prices that minimize $\sum_{s \in S} p_s$.
Using the paths established in Lemma \ref{le.path}, next we give explicit expression for values of $p^*_s$ for all slots $s$.
Let the cost of a landing schedule $x$ be defined as $cost(x)=\sum_{i \in A, s \in W(i)} x_{is} c_{is}$. 
Now consider a path $P(s,s')$ from $s$ to $s'$ in $G(x^*,p^*)$. Let $x(P(s,s'))$ be a landing schedule same as $x^*$ except
that for each edge on path $P(s,s')$ the corresponding indifferent flight shifts to the slot the edge is going into. 
Note that $x(P(s,s'))$ may not be feasible in LP (\ref{LP}). In particular it may violate capacity constraint of slot $s'$.

\begin{lemma}\label{le.prices}
Let $P(s^0,s)$ be a directed path from $s^0$ to $s$ in $G(x^*,p^*)$, $p^*_{s^0}=0$, and $x'=x(P(s^0,s))$, then
$p^*_s=cost(x^*)-cost(x')$.
\end{lemma}
\begin{proof}
Let the path $P(s^0,s)$ be $s^0\xrightarrow{\ 1\ } s^1\xrightarrow{\ 2\ } \dots \xrightarrow{k-1} s^{k-1}
\xrightarrow{\ k\ } s^k=s$, where the labels above the arrows are indices of indifferent flights. Since $i$ is the
indifferent flight in slot $s^{i-1}$, we have $p^*_{s^i}=p^*_{s^{i-1}} + (c_{is^{i-1}}-c_{is^i}),\ 1<i\le k$. Putting all
these together we get $p^*_s=p^*_{s^0} + \sum_{i=1}^k (c_{is^{i-1}}-c_{is^i})$. 
Since $x'$ is obtained from $x^*$ by shifting flight $i$ from slot $s^{i-1}$ to $s^i$, we can rewrite this expression as 
$p^*_s=p^*_{s^0}+\sum_{i \in A, s\in S} c_{is}
(x^*_{is}-x'_{is})=p^*_{s^0}+cost(x^*)-cost(x')$. Since $p^*_{s^0}=0$, this proves the lemma.
\end{proof}

\begin{lemma}\label{le.vcgx}
Let $x^{vcg}$ be a VCG landing schedule, then $x^{vcg}$ is a solution of LP (\ref{LP}).
\end{lemma}
\begin{proof}
Since $c_{is}$ is the ``delay cost" incurred by flight $i$ if scheduled in slot $s$, $-c_{is}$ is the corresponding
``welfare". Therefore, social welfare maximizing feasible schedule has to minimize the overall delay costs of all the
flights subject to the capacity constraints of the slots. This is exactly same as solving LP (\ref{LP}).
\end{proof}

In order to calculate VCG payment for flight $i$ we need to find an optimal landing schedule when flight $i$ is not present. 
This can be obtained by solving LP (\ref{LP}) with flight set $A_{-i}=A\setminus {i}$. 
Let $x^{i}$ be a solution of this LP.

For a schedule $x$ and flight set $A'\subseteq A$, define $cost(x,A')$ as $\sum_{i \in A',s\in W(i)} x_{is} c_{is}$. 
The overall minimum delay cost of agents in $A_{-i}$ is $cost(x^*,A_{-i})$ when $i$ participates and it is
$cost(x^{i},A_{-i})$ when $i$ does not participate. Therefore, the VCG payment of flight $i$, say $VCGpay_i$, is
$cost(x^*,A_{-i})-cost(x^{i},A_{-i})$. Since the allocation $x^*_{A_{-i}}$ is feasible
and $x^{i}$ is optimal, we get $ cost(x^*,A_{-i})\ge cost(x^{i},A_{-i})\Rightarrow VCGpay_i\ge0$.

Define $H^i$ to be a graph with slots as nodes, and there is an edge from $s$ to $s'$ if for a flight $j$, landing slot
shifts from $s$ in $x^*$ to $s'$ in $x^{i}$. 

\begin{lemma}\label{le.vcgpath}
Let $s^i$ be the slot assigned to flight $i$ as per $x^*$. Then graph $H^i$ is a directed path with $s^i$ being the sink.
\end{lemma}
\begin{proof}
If there is no incoming edge to $s^i$ then there are no edges in $H^i$ or else $x^*$ can not be the optimal allocation for
set $A$.  This is because the shifting indicated by edges of $H^i$ is also possible without removing flight $i$, which
together reduces overall delay cost as $cost(x^{i},A_{-i})\le cost(x^*,A_{-i})$. By a similar argument it follows that
$H^i$ can not have any directed cycles. Therefore, $H^i$ is a directed acyclic graph, with at least one incoming edge
to $s^i$. 

Next we decompose the graph into two components, say $D$ and $F$, where $D$ is a directed path to $s^i$ from a
source, and $F$ is the rest of $H^i$. Note that once flight $i$ is removed, the shifting indicated by $D$ and $F$ can be
carried out separately while maintaining the capacity constraints of the slots. Therefore, both individually should improve
the delay cost. However, shifting of $F$ is possible even without removing flight $i$, contradicting $x^*$ being optimal
allocation for set $A$. Therefore, $H^i$ consists of only the path $D$. 
\end{proof}

Using the structure of graphs $H^i$ and $G(x^*,p^*)$, next we prove that VCG payments are same as the prices $p^*$
assigned by our market mechanism. 

\begin{lemma}\label{le.vcgpay}
For every flight $i\in A$, if $x^*_{is'}=1$ then $VCGpay_i=p^*_{s'}$.
\end{lemma}
\begin{proof}
Let $P(s^0,s')$ be a path from $s^0$ to $s'$ in $G(x^*,p^*)$ with
$p^*_{s^0}=0$, and let $x'=x(P(s^0,s'))$. Let $x''=x'_{A_{-i}}$.
Vector $x''$ is feasible in LP (\ref{LP}) with flight set $A_{-i}$. Since $x^{i}$ is the
optimal solution of this LP, we have
\[\hspace{0cm}
\begin{array}{l}
cost(x^{i},A_{-i}) \le cost(x'',A_{-i}) \\
\hspace{2cm}\begin{array}{ll}
\Rightarrow & cost(x^*,A_{-i})-cost(x^{i},A_{-i}) \ge
cost(x^*,A_{-i})-cost(x'',A_{-i}) \\
\Rightarrow & VCGpay_i \ge cost(x^*,A)-cost(x',A) \\
 & \hspace{2cm}(\because \forall s,\ x'_{is}=x^*_{is}, \mbox{ and } \ \
x'_{A_{-i}}=x'')\\
 \Rightarrow & VCGpay_i \ge p^*_{s'}\ \  (\mbox{Lemma \ref{le.prices}})
\end{array}
\end{array}
\]

Next we show that $p^*_{s'} \ge VCGpay_i$. 
Suppose the path $H^i$ is $y^0\rightarrow y^1 \rightarrow \dots \rightarrow y^{k-1} \rightarrow y^k$,
where $y^k=s'$ (Lemma \ref{le.vcgpath}), and flight $f_j$ shifts from slot $y^{j-1}$ in $x^*$ to slot $y^j$ in $x^{i}$.
Since, $x^*$ and $(t^*,p^*)$ are solutions of LP (\ref{LP}) and (\ref{dual}) respectively, using Lemma \ref{lem.cost} we have 
\[
p^*_{y^j} + c_{f_j y^j}
\ge p^*_{y^{j-1}} +c_{f_j y^{j-1}},\ 1\le j\le k
\]

Putting all of them together and canceling the intermediate price variables we have 
\[
\begin{array}{lcl}
p^*_{s'}=p^*_{y^k}& \ge& p^*_{y^0} + \sum_{j=1}^k (c_{f_j y^{j-1}} - c_{f_j y^j}) \\
& =& p^*_{y^0} + \sum_{j \in A_{-i}, s \in W(j)} c_{is} (x^*_{js}-x^{i}_{js})
\end{array}
\]
The last equality is due to the fact that $x^*$ and $x^{i}$ are same except for flights $f_j,\ 1 \le j \le k$ and flight
$i$. This gives $p^*_{s'}\ge p^*_{y^0} + cost(x^*,A_{-i})-cost(x^{i},A_{-i})$. Since, all the prices are non-negative, we
get $p^*_{s'} \ge VCGpay_i$.
\end{proof}

Lemmas \ref{le.vcgx} and \ref{le.vcgpay} together gives the following theorem.

\begin{theorem}\label{thm.ic}
Let $x^*$ be the equilibrium schedule and $p^*$ be the equilibrium prices where $\sum_{s \in S} p_s$ is minimum.
The mechanism which schedules flight $i$ to slot $s$ if $x^*_{is}=1$ and charges $p^*_s$ is incentive compatible in
dominant strategy. 
\end{theorem}

\end{document}